\newtheorem{theorem}{Theorem}
\newtheorem{remark}[theorem]{Remark}
\newtheorem{lemma}[theorem]{Lemma}
\newtheorem{proposition}[theorem]{Proposition}
\newtheorem{example}[theorem]{Example}
\title{Experimental algorithms for the dualization problem}
\author{Mauro Mezzini
\and Fernando Cuartero Gomez
\and Jose Javier Paulet Gonzalez
\and Hernan Indibil de la Cruz Calvo
\and Vicente Pascual
\and Fernando L. Pelayo}
\begin{document}

\maketitle

\begin{abstract}
In this paper, we present experimental algorithms for solving the dualization problem. We present the results of extensive experimentation comparing the execution time of various algorithms.
\end{abstract}


\section{Introduction}\label{sec1}

The dualization problem was one of the most prominent open problem in computer science until recently \cite{DBLP:journals/qmi/MezziniGGCPP24} when its complexity was determined to be in $P$.

The dualization problem is formalized as follow. Given two Boolean functions $f:\{0,1\}^n \to \{0,1\}$ and $g:\{0,1\}^n \to \{0,1\}$ we say that $g \leq f$ if $g(x) \leq f(x)$  for all $x \in \{0, 1\}^n$. 
Given two Boolean vectors $v=(v_1, \dots, v_n)$ and $w=(w_1, \dots, w_n)$, we write $v \preceq w$ if $v_i \leq  w_i$ for all $i \in \{1,2,\dots, n\}$. 

A \emph{positive} (or elsewhere called \emph{monotone} \cite{doi:10.1137/S009753970240639X,10.1016/j.dam.2007.04.017,10.1016/S0166-218X(99)00099-2}) Boolean function satisfies the proposition that if $v \preceq w$ then $f(v) \leq f(w)$ \cite{DBLP:books/daglib/0028067}.

When $g \leq f$ we say that $g$ is an \emph{implicant} of $f$. An implicant $g$ of a function $f$ is called \emph{prime},  if there is no implicant $h\neq g$ of $f$ such that $g \leq h$.

A literal is a Boolean variable $x$ or its negation $\overline x$. It is known \cite{DBLP:books/daglib/0028067} that a positive Boolean function $f$ can be expressed by a disjunctive normal form (DNF) containing no negated literals. We will call it a \emph{positive} DNF expression of $f$. In the following, we will denote a positive DNF expression  of a positive Boolean function $f$ as
\begin{equation}
\varphi= \bigvee_{e \in F} \bigwedge_{i \in e} x_i
\end{equation} 
where $F$ is a set of subsets of $\{1,2, \dots, n\}$. For any $e \in F$ the implicant $\bigwedge_{i \in e} x_i$  of $f$ is also called \emph{term} of the DNF $\varphi$ and will be denoted simply by $e$. In the following we often identify $\varphi$ with its set of terms $F$. A positive DNF expression of a positive Boolean function 
it is said \emph{irredundant} if there is no $t \in F$ such that 
\[
\psi =\bigvee_{e \in F, e \neq t} \bigwedge_{i \in e} x_i
\]
is another positive DNF representation of $f$.   
It is well known \cite{DBLP:books/daglib/0028067} that a positive DNF which contains all and only the implicants of a positive Boolean function $f$ is unique and irredundant. We will call it \emph{positive irredundant DNF} (PIDNF) (elsewhere called \emph{prime DNF} \cite{doi:10.1137/S009753970240639X,10.1016/j.dam.2007.04.017,10.1016/S0166-218X(99)00099-2}).

Given a positive Boolean function $f:\{0,1\}^n \to \{0,1\}$ expressed in its PIDNF, the \emph{dualization} problem \cite{doi:10.1137/S009753970240639X, doi:10.1137/S0097539793250299, 10.1016/j.dam.2007.04.017, journals/jal/FredmanK96,Mezzini20231}  consists in finding the PIDNF of a positive Boolean function $g$ such that $f(x)=\overline {g}(\overline {x})$ for all $x \in \{0,1\}^n$.

In \cite{journals/jal/FredmanK96}, two algorithms that we call Algorithm $A_1$ and Algorithm $A_2$ with complexity respectively of  $N^{O(\log^2 N)})$ and  $N^{o(\log N)}$ where $N=|F|+|G|$ and $|F|$ and $|G|$ are the number of terms of the PIDNF expression of $f$ and $g$ respectively.
In this paper we present a new algorithm for the dualization problem. Furthermore we present the results of extensive experiments comparing the running time of various algorithms. Surprisingly the new algorithm we present, whose complexity on general hypergraph is exponential, in practices, on hypergraph satisfying \eqref{eq:intersect}, are much faster than Algorithm $A_1$.

\section{Preliminaries and hypergraphs}\label{sec2}

In the following the variable $x$ is interpreted sometimes as a Boolean (or binary) $n$-dimensional vector and sometimes as the corresponding decimal expression of the binary vector. In particular if $x$ is the decimal value of the binary vector $(x_1, \dots, x_n)$ then the decimal value of the binary vector $(\overline x_1, \dots, \overline  x_n)$ is $\overline  x = 2^n-x-1$.
We recall here some well known propositions and lemmas about positive and/or self-dual Boolean functions.

\begin{proposition} [\cite{journals/jal/FredmanK96}] \label{prop:intersection}
Necessary condition for two positive Boolean functions $f=\bigvee_{e \in F} \bigwedge_{i \in e}x_i$ and $g=\bigvee_{t \in G} \bigwedge_{j \in t}x_j$ expressed in their PIDNF to be mutually dual is that 
\begin{equation} \label{eq:intersect}
e \cap t \neq \emptyset  \text{ for every $e \in F$ and $t \in G$}
\end{equation}
\end{proposition}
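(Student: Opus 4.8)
The plan is to argue by contradiction. Suppose $f$ and $g$ are mutually dual, i.e.\ $f(x)=\overline{g}(\overline{x})$ for all $x\in\{0,1\}^n$, and yet there exist terms $e\in F$ and $t\in G$ with $e\cap t=\emptyset$. I will exhibit a single Boolean vector at which this identity fails, which is the desired contradiction.

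First I would take the characteristic vector $v$ of $e$, namely $v_i=1$ if $i\in e$ and $v_i=0$ otherwise. Since $\bigwedge_{i\in e}x_i$ occurs as a term of a DNF expression of $f$, it is an implicant of $f$; in particular this term evaluates to $1$ at $v$, so $f(v)=1$. Next I would consider the complement $\overline{v}$, which has $\overline{v}_i=0$ for $i\in e$ and $\overline{v}_i=1$ for $i\notin e$. Because $e\cap t=\emptyset$, every index $j\in t$ lies outside $e$, so $\overline{v}_j=1$ for all $j\in t$; hence the term $\bigwedge_{j\in t}x_j$ of $g$ evaluates to $1$ at $\overline{v}$, giving $g(\overline{v})=1$.

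Combining the two computations, $f(v)=1$ while $\overline{g}(\overline{v})=\overline{g(\overline{v})}=\overline{1}=0$, which contradicts $f(v)=\overline{g}(\overline{v})$. Therefore no pair $e\in F$, $t\in G$ with $e\cap t=\emptyset$ can exist, i.e.\ \eqref{eq:intersect} holds.

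I do not expect a real obstacle here; the argument uses only that the terms of a positive DNF are implicants (so their characteristic vectors are true points of the function) together with the elementary observation that complementing the characteristic vector of $e$ sets every coordinate outside $e$ to $1$. The one point to handle carefully is keeping the roles of $f$ versus $g$ and of $v$ versus $\overline{v}$ straight, and applying the definition $f(x)=\overline{g}(\overline{x})$ with the complement taken on both the argument and the output; the disjointness $e\cap t=\emptyset$ is exactly what allows the single witness $v$ to make $f$ true and $g(\overline{v})$ true at the same time. I would also remark that the PIDNF/irredundancy hypothesis is not actually used, so the necessary condition holds for arbitrary positive DNF expressions of $f$ and $g$.
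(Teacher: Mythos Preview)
Your argument is correct: choosing the characteristic vector $v$ of $e$ gives $f(v)=1$ and, since $e\cap t=\emptyset$, also $g(\overline{v})=1$, contradicting $f(v)=\overline{g}(\overline{v})$. The paper does not supply its own proof of this proposition; it simply cites the result from Fredman and Khachiyan, and your write-up is exactly the standard justification. Your closing remark that irredundancy is not needed is also accurate.
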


By Proposition \ref{prop:intersection}, if $f$ is self-dual then every implicant of $F$ must intersect every other implicant. So we have the following

\begin{lemma} [\cite{DBLP:journals/qmi/MezziniGGCPP24}] \label{less_than_2}
Let $f$ be a positive Boolean function which satisfies \eqref{eq:intersect}. Then $f(x)+ f(\overline{x}) \leq 1$ for all $0 \leq x < 2^n$.
\end{lemma}

\begin{lemma} [\cite{DBLP:journals/qmi/MezziniGGCPP24}]\label{balanced}
Suppose $f$ is self-dual. Then $f$ is balanced, that is, for $x$ in half of its domain $f(x)=0$  and on the other half of the domain $f(x)=1$.
\end{lemma}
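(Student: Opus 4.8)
The plan is to reduce the statement to a pairing argument on the antipodal involution $x\mapsto\overline x$. First I would unpack the hypothesis: saying that $f$ is self-dual means $f(x)=\overline f(\overline x)$ for all $x\in\{0,1\}^n$, which (since all values are in $\{0,1\}$) is exactly the identity $f(x)+f(\overline x)=1$ for every $0\le x<2^n$. One could also obtain this in two halves: applying Proposition~\ref{prop:intersection} with $g=f$ shows that a self-dual $f$ satisfies \eqref{eq:intersect}, so Lemma~\ref{less_than_2} gives $f(x)+f(\overline x)\le 1$, while self-duality applied to $\overline x$ (using $\overline{\overline x}=x$) gives $f(\overline x)=1-f(x)$ and hence the reverse inequality; either route yields $f(x)+f(\overline x)=1$.

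Next I would record that the map $\sigma\colon x\mapsto \overline x = 2^n-x-1$ is an involution on $\{0,1,\dots,2^n-1\}$, and that it has no fixed point: $x=\overline x$ would force $2x=2^n-1$, impossible since the right-hand side is odd (here $n\ge 1$). Consequently the $2^n$ points of the domain split into exactly $2^{n-1}$ disjoint unordered pairs $\{x,\overline x\}$. On each such pair the identity $f(x)+f(\overline x)=1$ forces precisely one member to have $f$-value $1$ and the other to have $f$-value $0$. Summing over the $2^{n-1}$ pairs then gives $|f^{-1}(1)|=2^{n-1}=|f^{-1}(0)|$, which is the claimed balancedness.

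I do not expect a serious obstacle here; the argument is essentially a counting identity. The only points that require any care are (i) confirming which convention of "self-dual" is in force so that the key identity $f(x)+f(\overline x)=1$ is available (it follows immediately from $f(x)=\overline f(\overline x)$), and (ii) the parity remark guaranteeing that $\sigma$ is fixed-point-free, so that the pairs genuinely partition the whole domain rather than leaving an unmatched middle point.
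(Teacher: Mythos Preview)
Your argument is correct: from the self-duality identity $f(x)=\overline{f(\overline x)}$ you get $f(x)+f(\overline x)=1$, the complementation map is a fixed-point-free involution on $\{0,1\}^n$, and the resulting $2^{n-1}$ antipodal pairs each contribute exactly one $1$ and one $0$. The paper itself does not prove this lemma here; it is only quoted from \cite{DBLP:journals/qmi/MezziniGGCPP24}, so there is nothing to compare against beyond noting that your pairing argument is the standard one and that the detour through Proposition~\ref{prop:intersection} and Lemma~\ref{less_than_2} is unnecessary once you invoke $f(x)=\overline{f(\overline x)}$ directly.
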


\begin{lemma} [\cite{DBLP:journals/qmi/MezziniGGCPP24}] \label{self-dual}
Let $f$ be a positive Boolean function expressed in its PIDNF which satisfies also \eqref{eq:intersect}. Then $f$ is self-dual if and only if $\sum_{x=0}^{2^{n}-1} f(x)= 2^{n-1}$.
\end{lemma}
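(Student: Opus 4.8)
The plan is to establish the two implications separately, each resting on a lemma already proved. For the direction ``$f$ self-dual $\Rightarrow \sum_{x} f(x) = 2^{n-1}$'', I would simply invoke Lemma~\ref{balanced}: a self-dual $f$ is balanced, so $f$ takes the value $1$ on exactly half of the $2^n$ points of $\{0,1\}^n$, whence $\sum_{x=0}^{2^n-1} f(x) = 2^{n-1}$. (Note this direction does not even use \eqref{eq:intersect}.)

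For the converse, the first step is the elementary observation that the complementation map $x \mapsto \overline{x} = 2^n - x - 1$ is an involution on $\{0,1,\dots,2^n-1\}$ with no fixed point: $x = \overline{x}$ would force $2x = 2^n - 1$, impossible since the left side is even and the right side odd. Consequently $\{0,1,\dots,2^n-1\}$ partitions into exactly $2^{n-1}$ two-element orbits $\{x,\overline{x}\}$. Since $f$ satisfies \eqref{eq:intersect}, Lemma~\ref{less_than_2} gives $f(x)+f(\overline{x}) \le 1$ on each orbit, so summing over the $2^{n-1}$ orbits yields $\sum_{x=0}^{2^n-1} f(x) \le 2^{n-1}$.

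Now the hypothesis $\sum_{x=0}^{2^n-1}f(x) = 2^{n-1}$ is precisely the equality case of this bound, which, because each orbit contributes a non-negative integer at most $1$, can hold only if $f(x)+f(\overline{x})=1$ on every orbit. Rewriting this as $f(\overline{x}) = 1 - f(x) = \overline{f}(x)$ and using once more that $x\mapsto\overline{x}$ is an involution, we get $f(x) = \overline{f}(\overline{x})$ for all $x\in\{0,1\}^n$, which is exactly the definition of $f$ being self-dual. I do not anticipate a genuine obstacle here: the only points needing care are the one-line parity argument that complementation is a fixed-point-free involution, and the remark that equality in the summed inequality forces equality orbit-by-orbit; everything else is bookkeeping with the prior lemmas.
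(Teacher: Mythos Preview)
Your argument is correct. Note, however, that the paper does not actually supply a proof of this lemma: it is quoted from \cite{DBLP:journals/qmi/MezziniGGCPP24} and stated without proof, as are Lemmas~\ref{less_than_2} and~\ref{balanced}. There is therefore no in-paper proof to compare against. That said, your derivation is exactly the natural one given the surrounding lemmas: Lemma~\ref{balanced} handles the forward direction immediately, and for the converse you pair points via the fixed-point-free involution $x\mapsto\overline{x}$, apply Lemma~\ref{less_than_2} orbitwise to get $\sum_x f(x)\le 2^{n-1}$, and observe that equality forces $f(x)+f(\overline{x})=1$ everywhere, i.e.\ self-duality. Every step is sound, including the parity check that $x\ne\overline{x}$ and the passage from the global equality to orbit-by-orbit equality.
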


Choose $n>4$ odd and consider the following Boolean  function $f$ whose positive DNF expression $\Phi$ has as a set $F$ of implicants, the set of all subsets of $\{1,\dots, n\}$ of cardinality $\left \lceil n/2 \right \rceil$ where $\lceil a \rceil $ stands for the least integer greater or equal than $a$.
\begin {lemma} [\cite{DBLP:journals/qmi/MezziniGGCPP24}]  \label{number_of_implicants}
The function $f$ expressed by $\Phi$  is self-dual. Moreover $\Phi$ is the PIDNF representation of $f$, and has a number of terms equal to $\binom{n}{\left \lceil n/2 \right \rceil}$. 
\end{lemma}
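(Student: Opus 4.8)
The plan is to first recognize that the function $f$ computed by $\Phi$ is the majority (threshold) function on $n$ variables. Since $n$ is odd we have $\lceil n/2 \rceil = (n+1)/2$, so every term of $\Phi$ is a conjunction of exactly $(n+1)/2$ distinct (unnegated) variables. Hence for $x \in \{0,1\}^n$ we have $f(x)=1$ if and only if the set of coordinates on which $x$ is $1$ contains some $e \in F$, which in turn holds if and only if the Hamming weight $w(x) := \sum_{i} x_i$ satisfies $w(x) \geq (n+1)/2$. I would record this characterization first, since everything else follows from it.

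Next I would show that $\Phi$ is the PIDNF of $f$ by checking that its terms are exactly the prime implicants of $f$. Because $f$ is positive, every prime implicant is a positive term, so it suffices to work with subsets $t \subseteq \{1,\dots,n\}$. Then: (i) each $e \in F$ is an implicant, since $x_i = 1$ for all $i \in e$ forces $w(x) \geq (n+1)/2$; (ii) each $e \in F$ is prime, because deleting any $i \in e$ leaves $e \setminus \{i\}$ of size $(n-1)/2$, and the vector supported exactly on $e \setminus \{i\}$ has weight $(n-1)/2 < (n+1)/2$, so $f$ vanishes there and $e \setminus \{i\}$ is not an implicant; (iii) there are no other prime implicants: a term $t$ with $|t| \leq (n-1)/2$ is not an implicant (evaluate $f$ on the vector supported exactly on $t$), while a term with $|t| \geq (n+1)/2 + 1$ strictly contains some $e \in F$ and is therefore not prime. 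Since the PIDNF is the (unique, irredundant) disjunction of all prime implicants, it follows that $\Phi$ is the PIDNF of $f$; in particular it is irredundant, and its number of terms equals the number of $(n+1)/2$-subsets of $\{1,\dots,n\}$, namely $\binom{n}{\lceil n/2 \rceil}$.

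For self-duality I would invoke Lemma \ref{self-dual}: $f$ is positive, is now known to be given in its PIDNF, and satisfies \eqref{eq:intersect} because any two subsets of $\{1,\dots,n\}$ of size $(n+1)/2$ intersect (their sizes sum to $n+1 > n$). So it remains only to verify $\sum_{x=0}^{2^n-1} f(x) = 2^{n-1}$. By the characterization above this sum equals $\sum_{k=(n+1)/2}^{n} \binom{n}{k}$, and since $n$ is odd the map $k \mapsto n-k$ pairs up the binomial coefficients with no fixed middle term, so this is exactly half of $\sum_{k=0}^{n}\binom{n}{k} = 2^n$, i.e.\ $2^{n-1}$. (Alternatively, one can bypass Lemma \ref{self-dual} and argue directly that $f(x) + f(\overline{x}) = 1$ for every $x$: since $w(\overline{x}) = n - w(x)$ and $n$ is odd, exactly one of $w(x)$ and $n - w(x)$ is $\geq (n+1)/2$.)

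None of these steps is genuinely difficult; the only point needing a little care is step (iii), the claim that $\Phi$ has no prime implicants beyond the listed $(n+1)/2$-subsets, since that is precisely what forces both irredundancy and the exact term count. Everything else reduces to short computations with Hamming weights and the symmetry $\binom{n}{k} = \binom{n}{n-k}$.
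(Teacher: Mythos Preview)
Your proof is correct and complete. Note, however, that the paper does not supply its own proof of this lemma: it is stated with a citation and left unproved, so there is nothing in the paper to compare against. Your argument---identifying $f$ as the majority function on an odd number of variables, showing that the $(n+1)/2$-subsets are exactly the prime implicants (steps (i)--(iii)), and then deducing self-duality either via Lemma~\ref{self-dual} together with the binomial symmetry $\binom{n}{k}=\binom{n}{n-k}$, or directly from $w(x)+w(\overline{x})=n$ with $n$ odd---is the standard route and every step is sound.
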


\noindent

\begin{algorithm}
\caption{Algorithm Dual}
\label{alg:algoritm_dual}
\begin{algorithmic}[1]
\REQUIRE A PIDNF of a positive Boolean function $f$ satisfying \eqref{eq:intersect}.
\ENSURE \emph{TRUE} when $f$ is self-dual and \emph{FALSE} otherwise.
\STATE Let $S = \sum_{x=0}^{2^{n-1}-1} [f(x)+ f(\overline{x})]$
\IF {$S = 2^{n-1}$}
	\RETURN \emph{TRUE}
\ENDIF
\RETURN \emph{FALSE}
\end{algorithmic}
\end{algorithm}

The algorithm Dual is an application of Lemma \ref{self-dual}.
Now if $N$ is the number of terms of the PIDNF of $f$ and $n$ is the number of variables then the complexity of computing $f(x)$ is $O(nN)$. Furthermore the complexity of step 1 of algorithm Dual is $O(2^n)$. Therefore, when the number of terms $N$ is $O(2^n)$ as, for example, in the case of PIDNF $\Phi$ of Lemma \ref{number_of_implicants}, Algorithm Dual has complexity $O(nN^2)= O(N^2 \log N)$, which is polynomial in the dimension of the input. So we can state the following theorem.
\begin {theorem} [\cite{DBLP:journals/qmi/MezziniGGCPP24}] \label{polynomial-dual}
The asymptotic complexity of algorithm Dual is $O(N^{2} \log{N})$. 
\end{theorem}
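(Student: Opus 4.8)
The plan is to establish Theorem~\ref{polynomial-dual} by a direct accounting of the cost of each line of Algorithm~\ref{alg:algoritm_dual}, and then to express that cost in terms of $N = |F|$ alone, using the fact that for the inputs under consideration $N$ is exponentially large in $n$. First I would recall the evaluation cost: given the PIDNF $\varphi = \bigvee_{e\in F}\bigwedge_{i\in e} x_i$ of $f$ and a fixed vector $x \in \{0,1\}^n$, one computes $f(x)$ by scanning each term $e \in F$ and checking whether $e \subseteq \{i : x_i = 1\}$; a single term takes $O(n)$ time, so one evaluation of $f$ costs $O(nN)$. Next I would account for step~1: the loop ranges over $x$ from $0$ to $2^{n-1}-1$, so it performs $2^{n-1}$ iterations, and in each iteration it evaluates $f$ twice — once at $x$ and once at $\overline x = 2^n - x - 1$ — contributing $O(nN)$ per iteration. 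Hence step~1 costs $O(2^n \cdot nN)$, and the remaining comparison and branch in steps~2–5 are $O(1)$ (or at most $O(n)$ for comparing the integer $S$), so the total running time is $O(2^n\, nN)$.

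The second half of the argument is the passage from $O(2^n n N)$ to $O(N^2\log N)$. This step relies on the hypothesis — implicit in the statement and made explicit in the surrounding text — that we are considering inputs for which $N = \Theta(2^n)$, the canonical example being $\Phi$ from Lemma~\ref{number_of_implicants}, whose term count $\binom{n}{\lceil n/2\rceil}$ is indeed $\Theta(2^n/\sqrt n)$ and in particular $2^n = O(N \log N)$ up to the usual asymptotics, or more crudely $2^n = O(N^2)$. Under $2^n = O(N)$ (or the weaker $2^n = O(N\log N)$, which suffices), we substitute to get $O(2^n n N) = O(N \cdot n \cdot N) = O(n N^2)$. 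Finally, since $N \le 2^n$ we have $n \ge \log_2 N$, and for the relevant family $N = \Theta(2^n/\operatorname{poly}(n))$ gives $n = \Theta(\log N)$, so $n = O(\log N)$ and therefore $O(nN^2) = O(N^2\log N)$, which is the claimed bound.

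I would present the write-up in the order: (i) cost of a single evaluation of $f$; (ii) cost of step~1 of the algorithm as $2^{n-1}$ evaluations of the pair $f(x), f(\overline x)$; (iii) observe steps~2–5 are dominated by step~1; (iv) invoke the regime $N = \Theta(2^n)$ and $n = \Theta(\log N)$ to convert the $2^n$ and $n$ factors into powers of $N$ and $\log N$; (v) conclude $O(N^2\log N)$. Correctness of the algorithm itself is not part of this theorem — it is an immediate consequence of Lemma~\ref{self-dual}, already invoked in the text — so the proof is purely a complexity analysis.

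The step I expect to be the main obstacle, or at least the one most in need of care, is (iv): the statement is stated unconditionally as "the asymptotic complexity of algorithm Dual is $O(N^2\log N)$", but the raw bound $O(2^n n N)$ is only polynomial in $N$ when $N$ is not exponentially smaller than $2^n$. The honest proof must therefore either restrict attention to the family of inputs for which $N = \Theta(2^n)$ (as the surrounding discussion does, citing Lemma~\ref{number_of_implicants}) or state the bound as being "in the worst case over such inputs". I would make this dependence explicit rather than sweep it under the rug, noting that $\binom{n}{\lceil n/2\rceil} = \Theta(2^n / \sqrt{n})$ by Stirling, so that $2^n = O(N\sqrt{\log N}) = O(N\log N)$ and $n = O(\log N)$, both of which feed directly into the final estimate $O(2^n n N) = O(N \log N \cdot \log N \cdot N)$; a slightly looser but cleaner route is simply $2^n = O(N\log N)$ and $n = O(\log N)$ giving $O(N^2 \log^2 N)$, and then observing the tighter $2^n = O(N)$ up to the $\sqrt{n}$ factor recovers $O(N^2\log N)$.
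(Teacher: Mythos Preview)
Your proposal is correct and follows essentially the same argument as the paper: evaluate $f$ in $O(nN)$ time, iterate $O(2^n)$ times in step~1, and then pass to $O(nN^2)=O(N^2\log N)$ under the assumption $2^n=O(N)$ (equivalently $N=\Theta(2^n)$, as in Lemma~\ref{number_of_implicants}). You are also right to flag that step~(iv) is the delicate point---the paper's surrounding text makes exactly this restriction to the regime where $N$ is exponential in $n$, so your explicit treatment of it is, if anything, more careful than the original.
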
\vspace{0.3cm}
We finally observe that if $N = O(n^k)$ then Algorithm A has complexity $O(N^{o(\log (N)})=O(n^{k^2})$ which means that when $N$ is polynomially bounded by the number of variables then, by using Algorithm A, we have a polynomial time complexity for solving the dualization problem. Nevertheless, when $N=O(2^n)$ \emph{every} algorithm has complexity $\Omega(2^n)$, that is, no algorithm can be faster than $O(2^n)$.\\
Given all the above, we may state the following theorem which characterize the complexity of of the dualization problem.\\
\begin{theorem} [\cite{DBLP:journals/qmi/MezziniGGCPP24}]
The dualization problem has complexity:
\begin{itemize}
\item Polynomial in the number $n$ of variables of its PIDNF, when $N = O(n^k)$.
\item Polynomial in the number $N$ of terms of its PIDNF, when $N = O(2^n)$.
\end{itemize}
\end{theorem}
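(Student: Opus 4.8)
The statement is an assembly of facts already in hand, so the plan is to name the ingredient behind each of the two items and to fill the one real gap. A preliminary reading: ``$N=O(2^n)$'' here should be understood as the opposite regime to ``$N=O(n^k)$'', i.e.\ the one in which $2^n=O(N)$ too (up to the $\mathrm{poly}(n)$ Sperner discrepancy between $N$ and $\binom{n}{\lceil n/2\rceil}$), the situation exemplified by the PIDNF $\Phi$ of Lemma~\ref{number_of_implicants}; indeed $N\le\binom{n}{\lceil n/2\rceil}=O(2^n)$ holds for \emph{every} PIDNF, so only in this regime is the item non-vacuous.

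I would prove the second item first, because the step that actually needs work is that Algorithm~Dual merely \emph{decides} self-duality whereas the dualization problem asks one to \emph{produce} the PIDNF of $g$. The plan is: first note that the dual $g$ is again positive (from $v\preceq w$ one gets $\overline w\preceq\overline v$, hence $f(\overline w)\le f(\overline v)$ and $g(v)=\overline{f(\overline v)}\le\overline{f(\overline w)}=g(w)$), so $g$ has a well-defined PIDNF whose terms are exactly the sets $\{i:x_i=1\}$ for the $\preceq$-minimal points $x$ with $g(x)=1$. Then (i) tabulate $g(x)=1-f(\overline x)$ for all $x\in\{0,1\}^n$, each value costing $O(nN)$ by scanning the $N$ terms of $f$ --- this is exactly step~1 of Algorithm~Dual, total cost $O(2^n\cdot nN)$; (ii) with $O(n)$ further work per point, keep those $x$ with $g(x)=1$ all of whose immediate lower neighbours (one $1$-bit cleared) have $g$-value $0$; (iii) output the corresponding terms. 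Since $2^n=O(N)$, the whole computation runs in $O(2^n\cdot nN)=O(nN^2)=O(N^2\log N)$, the bound of Theorem~\ref{polynomial-dual}, hence polynomial in $N$. (One may add that on instances such as $\Phi$ even reading the input costs $\Omega(nN)=\Omega(2^n)=\Omega(N)$, so here $N$ is the natural parameter.)

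For the first item I would simply invoke Algorithm~$A_2$ of \cite{journals/jal/FredmanK96}, which solves the dualization problem in time $N^{o(\log N)}$: when $N=O(n^k)$ with $k$ constant one has $\log N=O(\log n)$, and substituting this into the subexponential exponent yields the polynomial-in-$n$ bound recorded (as $O(n^{k^2})$) in the discussion just above. This substitution is the single delicate point of the argument --- it must be done by tracking the explicit form of the Fredman--Khachiyan exponent, not the opaque symbol $o(\log N)$ --- and it is the only genuine computation the proof requires; everything else is immediate from Theorem~\ref{polynomial-dual} and the tabulation argument above.
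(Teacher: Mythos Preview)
Your proposal is correct and follows the same two-ingredient route as the paper's discussion preceding the theorem: Algorithm~$A$ (Fredman--Khachiyan) for the regime $N=O(n^k)$, yielding the $O(n^{k^2})$ bound, and the $O(nN^2)=O(N^2\log N)$ complexity of Algorithm~Dual for the regime $N=O(2^n)$. You actually go further than the paper in two places --- you make explicit that ``$N=O(2^n)$'' must be read as $2^n=O(N)$ for the $O(nN^2)$ bound to follow (the paper leaves this implicit), and you bridge the gap between Algorithm~Dual's \emph{decision} output and the dualization problem's \emph{search} output by tabulating $g$ and extracting its $\preceq$-minimal true points --- neither point is addressed in the paper's own justification, so your argument is strictly more complete.
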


By using a classical quantum computer we may take advantage of the Grover search algorithm and we can speed up the complexity of the classical computer reaching complexity $O(N^{3/2} log N)$ \cite{DBLP:journals/qmi/MezziniGGCPP24}. Furthermore we can resort to a quantum annealer algorithm \cite{DBLP:journals/qmi/MezziniGGCPP24}.

\subsection{Hypergraphs}

Given a set $V$, a \emph{hypergraph} $H$ is a family of subsets of $V$, that is $H= \{e : e \subseteq V\}$. In the following we always consider hypergraphs $H$ for which $ \bigcup_{e \in H} e= V$ and denote $V= V(H)$. Each set $e \in H$  is called an \emph{hyperedge}.  A practical representation of a hypergraph is one in which the hypergraph is represented as a bipartite graph $G$ with bipartition $(V(H),H)$. An edge $\{i, e\}$ is in $G$ if and only if $i \in e$ where $i \in V(H)$ and $e \in H$. 
A hypergraph is \emph{connected} if and only if its bipartite representation is connected. There is a natural bijection between the connected components of hypergraph and the connected components of its bipartite representation.

If $p \subseteq e \in H$ we denote by $H - p$ the hypergraph obtained from $H$ by removing each vertex of $p$ from any hyperedge containing it. That is $H-p = \{e \setminus p | e \in H\}$.

\begin{example}
Consider, as an example, the  bipartite representation of the hypergraph $H= \{\{0,3\},\{0,4\},\{1,3,4\},\{0,1,2\},\{2,3,4\}\}$ of Figure \ref{fig:example}(A). Then $H-\{3\}$ is the hypergraph whose bipartite representation is in Figure \ref{fig:example}(B).
\end{example}
\vspace{0.3cm}

If $s \subseteq V(H)$ then we denote by $N_H(s)= \{e \in H|e \cap s \neq \emptyset\}$ and call it the \emph{neighbourhood} of $s$ in $H$. We also say that $s$ \emph{covers} or \emph{hits} $N_H(s)$.
A \emph{hitting set} of a hypergraph $H$ is a set $t \subseteq V(H)$ such that $N_H(t) = H$.
Denote by $hit_H$ the set of hitting sets of $H$, that is, $hit_H=\{t \subseteq V(H)|N_H(t)=H\}$. If $H = \emptyset$ then we set $|hit_H|=1$.

\section{New classical computer algorithms for solving the dualization problem}

Based on Lemma \ref{less_than_2} and Lemma \ref{self-dual} we may devise the following algorithms for the self-duality testing.  

\subsection{Counting all the hitting sets of a hypergraph}

If a positive Boolean function $f$ is not self-dual and its PIDNF satisfies \eqref{eq:intersect}, then, by Lemma \ref{self-dual}, $\sum_{x=0}^{2^{n}-1} (1-f(x))>2^{n}-\sum_{x=0}^{2^{n}-1} f(x)>2^{n}-2^{n-1}=2^{n-1}$. In other words if $S =\{x \in \{0,1\}^n| f(x)=0\}$ and $|S| > 2^{n-1}$ then $f$ is not self-dual, otherwise, by Lemma \ref{self-dual}, it is self-dual. If we consider $F$ as a hypergraph, then computing $|S|$ is equivalent to compute the number of hitting set of the hypergraph $F$.
In fact suppose that $t \in hit_F$ then let $x\in \{0,1\}^n$ such that $x_i = 0$ for $i \in t$ then, since $F$ satisfies \eqref{eq:intersect}, we have that $f(x)=0$. In order to do this we describe an algorithm for counting all the hitting sets of a hypergraph. \\\\
\noindent
Let $e$ be a hyperedge of $H$ and $s \subseteq e$, $s \neq \emptyset$. We define $hit_H(e,s)=\{t \in hit_H| t \cap e = s\}$.

\begin{lemma} \label{hit_subset}
The sets $hit_H(e,s)$ for $s \subseteq e$, $s \neq \emptyset$ form a partition of $hit_H$.
\end{lemma}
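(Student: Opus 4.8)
The plan is to show two things: first, that the sets $hit_H(e,s)$ for $\emptyset \neq s \subseteq e$ are pairwise disjoint, and second, that their union is all of $hit_H$. Fix a hyperedge $e \in H$; note that since $e$ is a hyperedge (hence nonempty by the standing assumption $\bigcup_{e'\in H} e' = V(H)$, or simply because we may take $e$ nonempty — if $H$ contained the empty set it would have no hitting sets at all, a degenerate case one can note separately), the collection of candidate index sets $s$ ranges over the nonempty subsets of $e$, of which there are $2^{|e|}-1$.

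For disjointness, I would argue directly: suppose $t \in hit_H(e,s) \cap hit_H(e,s')$ with $s, s' \subseteq e$ nonempty. By definition this means $t \cap e = s$ and $t \cap e = s'$, so $s = s'$, and hence the two blocks coincide. Thus distinct values of $s$ give disjoint blocks.

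For the covering property, I would take an arbitrary $t \in hit_H$ and set $s := t \cap e$. The key observation is that $s \neq \emptyset$: since $t$ is a hitting set, $N_H(t) = H$, so in particular $t$ must hit the hyperedge $e$, i.e. $t \cap e \neq \emptyset$. Also $s = t \cap e \subseteq e$ trivially. Hence $s$ is a nonempty subset of $e$, and by construction $t \cap e = s$, so $t \in hit_H(e,s)$. Therefore every $t \in hit_H$ lies in one of the blocks, and combined with disjointness we conclude that $\{hit_H(e,s) : \emptyset \neq s \subseteq e\}$ partitions $hit_H$.

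I do not anticipate a genuine obstacle here — the statement is essentially a bookkeeping lemma, and the only point requiring a moment's care is the nonemptiness of $s = t\cap e$, which is exactly where the hitting-set hypothesis is used (and also why we must restrict to nonempty $s$ in the definition of $hit_H(e,s)$). One might also want to remark, for completeness, that some blocks may be empty (not every nonempty $s \subseteq e$ need be realized as $t \cap e$ for an actual hitting set $t$), but this is harmless: a partition is allowed to have empty parts, or one simply restricts attention to the nonempty blocks.
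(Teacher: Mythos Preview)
Your proposal is correct and follows essentially the same argument as the paper: disjointness is immediate from the fact that $t\cap e$ has a single value, and coverage uses that any hitting set must intersect the hyperedge $e$, yielding a nonempty $s=t\cap e$. The paper's proof is just a terser version of what you wrote.
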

\begin{proof}
Given $s, p \subseteq e$, with $s \neq \emptyset$ and $p \neq \emptyset$, obviously we have that $hit_H(e,s) \cap hit_H(e,p)= \emptyset$ if $s \neq p$. 
Furthermore any $t \in hit_H$ must contain a non empty subset of $e$ and this proves that $\bigcup_{s \subseteq e, s \neq \emptyset} hit_H(e,s)= hit_H$. 
\end{proof}

\noindent
By Lemma \ref{hit_subset}, in order to compute the $|hit_H|$ we can compute $\sum_{s\subseteq e, s\neq \emptyset}|hit_H(e,s)|$. 

\begin{remark} \label{deletion}
Suppose that  $H$ is the PIDNF of a positive Boolean function satisfying \eqref{eq:intersect}. Let $e \in H$, $p \subseteq e$ and let $H_1=H - p$. Since in $H$ no hyperedge is subset of any other hyperedge, after the removal of the vertices in $p$, if $p \neq e$ then $|H_1|=|H|$.
\end{remark}
\vspace{0.3cm}

\noindent
Algorithm \ref{alg:hitting_sets_subset} computes $|hit_H(e,s)|$ for any $e \in H$ and any $s \subseteq e$, $s \neq \emptyset$. It first removes from $H$ all the vertices in $e \setminus s$ since these vertices are never used to build a hitting set $t $ such that $t \cap e = s$. After this, $s$ is an hyperedge of $H_1$. Next we obtain the hypergraph $H_2$ by removing from $H_1$ all the hyperedges in $N_{H_1}(s)$. If $n_1=|V(H1)\setminus (V(H_2) \cup s)| $ then the algorithm returns $2^{n_1} \cdot |hit_{H_2}|$.
\noindent

\begin{algorithm}
\caption{Counting Hitting Sets with Subset Removal}
\label{alg:hitting_sets_subset}
\begin{algorithmic}[1]
\REQUIRE A hypergraph $H\neq \emptyset$, a hyperedge $e \in H$, a non-empty $s \subseteq e$
\ENSURE The cardinality of  $hit_H(e,s)$
\STATE $H_1 \leftarrow H - (e \setminus s)$ 
\STATE $H_2 \leftarrow H_1 \setminus N_{H_1}(s)$
\STATE $n_1 \leftarrow |V(H_1) \setminus (V(H_2) \cup s)|$
\RETURN $2^{n_1} \cdot |hit_{H_2}|$
\end{algorithmic}
\end{algorithm}

\begin{lemma} \label{algorithm_subset}
The Algorithm \ref{alg:hitting_sets_subset} correctly computes $|hit_H(e,s)|$ for an $e \in H$ and an $s \subseteq e$, $s \neq \emptyset$.
\end{lemma}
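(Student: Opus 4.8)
The plan is to show that the set $hit_H(e,s)$ is in bijection with a product set of the form $2^{U} \times hit_{H_2}$, where $U$ is the set of ``free'' vertices counted by $n_1$, and then conclude that $|hit_H(e,s)| = 2^{n_1}\cdot|hit_{H_2}|$. First I would fix notation: let $H_1 = H - (e\setminus s)$ and $H_2 = H_1 \setminus N_{H_1}(s)$, and let $U = V(H_1)\setminus (V(H_2)\cup s)$, so that $n_1 = |U|$. The key structural observation is that a hitting set $t$ of $H$ with $t\cap e = s$ may never use any vertex of $e\setminus s$, so without loss of generality we restrict attention to subsets of $V(H)\setminus(e\setminus s)$; on this vertex set, hitting $H$ is the same as hitting $H_1$, because deleting the vertices of $e\setminus s$ from every hyperedge does not change which subsets (avoiding $e\setminus s$) are hitting sets. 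Moreover since $s$ is required to be contained in $t$, the hyperedges in $N_{H_1}(s)$ are automatically hit by $s$ alone, and what remains to be hit is exactly $H_2$.

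Next I would make the bijection precise. Given $t\in hit_H(e,s)$, write $t' = t\setminus s$; since $t\cap e = s$ and $t$ avoids $e\setminus s$, $t'$ is a subset of $V(H_1)\setminus s = U \cup V(H_2)$ (here one uses that every vertex of $V(H_1)$ not in $s$ lies either in $V(H_2)$ or in $U$). Split $t'$ as $t_U = t'\cap U$ and $t_2 = t'\cap V(H_2)$. I would then argue that the map $t\mapsto (t_U, t_2)$ is a bijection from $hit_H(e,s)$ onto $2^{U}\times hit_{H_2}$. For surjectivity and well-definedness one checks that $s\cup t_U\cup t_2$ hits $H$ iff $t_2$ hits $H_2$: the hyperedges of $N_{H_1}(s)$ are hit by $s$, the remaining hyperedges are precisely those of $H_2$ and their vertices lie in $V(H_2)$, so only $t_2$ can contribute to hitting them, and vertices of $U$ are irrelevant (they appear in no hyperedge of $H_2$). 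Remark \ref{deletion} guarantees that in the case $p = e\setminus s \neq e$ no hyperedge collapses or becomes contained in another, so $H_1$ is again a genuine ``PIDNF-type'' hypergraph and $|H_1| = |H|$; the degenerate case $s = e$ is handled directly since then $H_1 = H$ and no vertices are deleted. Counting the preimages gives $|hit_H(e,s)| = 2^{|U|}\cdot|hit_{H_2}| = 2^{n_1}\cdot|hit_{H_2}|$, which is exactly the algorithm's output.

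The main obstacle I anticipate is the bookkeeping around the vertex set: one must verify carefully that $V(H_1)$ is partitioned (up to $s$) into $U$ and $V(H_2)$, i.e. that every vertex of $H_1$ outside $s$ is either isolated after deleting $N_{H_1}(s)$ (hence in $U$) or survives in some hyperedge of $H_2$ (hence in $V(H_2)$), and that these two possibilities are mutually exclusive. A subtle point is that a vertex could lie both in a hyperedge of $N_{H_1}(s)$ and in a hyperedge of $H_2$; such a vertex is correctly placed in $V(H_2)$ by the definition of $n_1$, and including or excluding it from $t$ is constrained only by whether it is needed to hit $H_2$, so it is rightly counted inside $hit_{H_2}$ and not as a free vertex. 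Once this partition is nailed down, the bijection and the arithmetic are routine.
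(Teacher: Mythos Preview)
Your proposal is correct and follows essentially the same approach as the paper: delete the vertices of $e\setminus s$ to form $H_1$, strip off $N_{H_1}(s)$ to form $H_2$, and argue that the hitting sets of $H$ meeting $e$ in exactly $s$ are in bijection with pairs consisting of a hitting set of $H_2$ and an arbitrary subset of the ``free'' vertices $V(H_1)\setminus(V(H_2)\cup s)$. If anything, your plan is more careful than the paper's own proof about making the bijection explicit and checking the vertex-set partition, but the underlying argument is identical.
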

\begin{proof}
We first note that any $t \in hit_H(e,s)$ does not contain any vertex in $e \setminus s$. Therefore, in step 1 of Algorithm \ref{alg:hitting_sets_subset}, we delete from $H$ all the vertices in $e \setminus s$ since they will not be used to obtain a hitting set $t$ such that $t \cap e=s$. Let $H_1= H - (e \setminus s)$. We note that, by Remark \ref{deletion}, $|H_1|=|H|$ since $s\neq \emptyset$. Therefore any hitting set of $H_1$ is a hitting set of $H$. Moreover $s \in H_1$, that is, $s$ is a hyperedge of $H_1$. 
At this point all we need to do is to count all the hitting sets of $H_1$ containing $s$. Let $H_2= H_1 \setminus N_{H_1}(s)$.  
We show now that every hitting set of $H_1$, containing $s$, can be obtained by the union of a hitting set $t_2$ of $H_2$ and any subset of $V(H_1)\setminus V(H_2)$ containing $s$. Let $t_2 \subseteq V(H_2)$ such that $N_{H_2}(t_2)= H_2$. Then clearly $s \cup t_2 \in hit_{H_1}$. On the other hand, let $t \in hit_{H_1}$ such that $s \subseteq t$ and let $t_1 = t \setminus s$.  Since, by definition, $N_{H_1}(s) \cap H_2=\emptyset$ we have that $H_2 \subseteq N_{H_1}(t_1)$. Let $t_2 = \{v | v  \in t_1 \cap f, f \in H_2\}$, that is $t_2$ is obtained by taking the union of the vertices in $t_1$ that are adjacent to some edge in $H_2$. Note that $s \cup t_2$ is a hitting set of $H_1$. Thus any $t\in hit_{H_1}$ is the union of $s$ with any hitting set $t_2 \subseteq V(H_2)$ of $H_2$ and any subset of $V_1=V(H_1)\setminus (V(H_2) \cup s)$. 
Therefore $|hit_H(e,s)|=2^{|V_1|}\cdot |hit_{H_2}|$.
\end{proof}

\begin{example} [continued] \label{ex:algorithm_hit}
Consider the  bipartite representation of the hypergraph of Figure \ref{fig:example}(A). Suppose we want to compute $hit_H(e, s)$ where $e =\{0,3\}$ and $s=\{0\}$. First we remove $v=3$ from the hypergraph by removing it from any hyperedge containing it. We eventually obtain the hypergraph $H_1$ of Figure \ref{fig:example}(B). Now $s$ is a hyperedge of $H_1$. After removing from $H_1$ the hyperedges in $N_{H_1}(s)$ we obtain the hypergraph $H_2$ of Figure \ref{fig:example}(C). It is immediate to see that $hit_{H_2}=\{\{4\},\{1,4\},\{2,4\},\{1,2\},\{1,2,4\}\}$. Since $V(H_1) \setminus (V(H_2) \cup s)=\emptyset$ then $n_1=0$ and $|hit_H(e, s)|= 2^{n_1} \cdot |hit_{H_2}|= 5$. In fact, it is not difficult to see that $hit_H(e, s)=\{\{0,4\},\{0,1,4\},\{0,2,4\},\{0,1,2\},\{0,1,2,4\}\}$.
\end{example}
\vspace{0.5cm}
\noindent
Now we are in position to describe the algorithm for counting all the hitting sets of a hypergraph. First note that if $H$ has $k$ connected component say $G_1, \dots, G_k$ then the number of hitting sets of the hypergraph is $\prod_{i=1}^k |hit_{G_i}|$. So the algorithm computes $|hit_{G_i}|$ for all connected components of $H$ by using Algorithm \ref{alg:hitting_sets_subset}.

\begin{figure}[ht]
    \centering
    \includegraphics[width=0.7\textwidth]{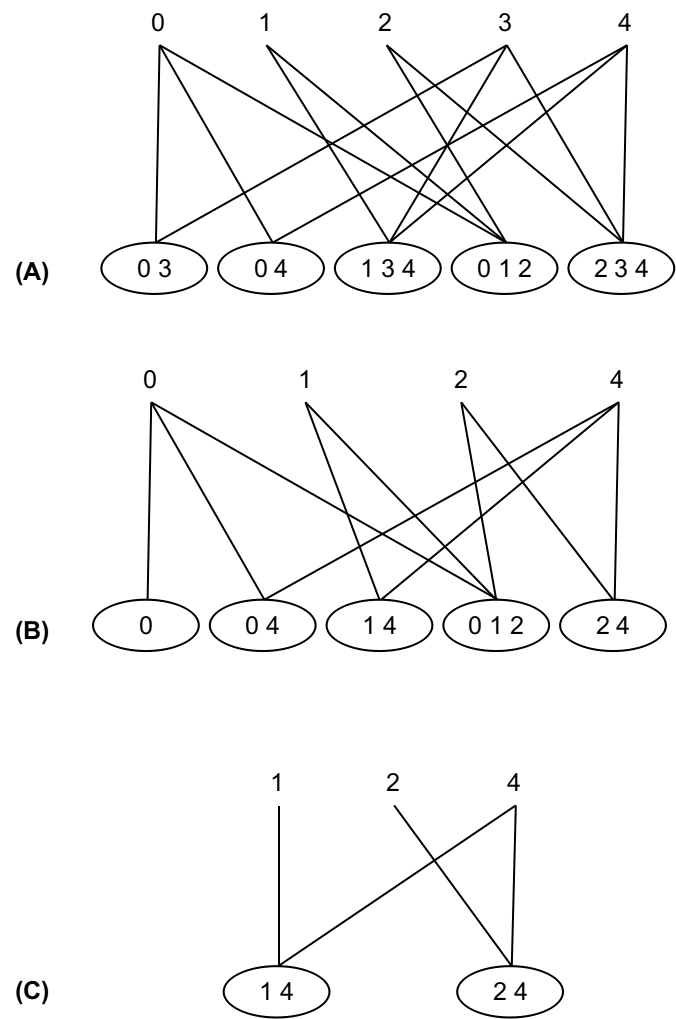}
    \centering        
    \caption{\textbf{(A)} The bipartite representation of the hypergraph $H= \{\{0,3\},\{0,4\},$ $ \{1,3,4\},\{0,1,2\},$ $\{2,3,4\}\}$. \textbf{(B)} The bipartite representation of the hypergraph $H_1= H- \{3\}$. \textbf{(C)} The bipartite representation of $H_2= H_1 \setminus N_{H_1}(0)$. There are five hitting sets in $H_2$: $\{ \{4\}, \{1,4\},\{2,4\}, \{1,2, 4\}, \{1,2\}\}$ 
    }
   \label{fig:example}
\end{figure}

\begin{algorithm}
\caption{Counting all the hitting sets in a hypergraph}
\label{alg:hitting_sets}
\begin{algorithmic}[1]
\REQUIRE The set of connected components $\{G_1, \dots, G_h\}$ of a hypergraph $H$
\ENSURE The cardinality of  $hit_H$
\STATE $nhit \leftarrow 1$
\FOR{$i =1$ to $h$}
    \STATE let $e$ be a hyperedge in $G_i$
	\STATE $nhit_i \leftarrow \sum_{s\subseteq e, s\neq \emptyset}|hit_H(e,s)|$
    \STATE $nhit \leftarrow nhit \cdot nhit_i$
\ENDFOR
\RETURN $nhit$
\end{algorithmic}
\end{algorithm}

\begin{lemma}
Algorithm \ref{alg:hitting_sets} correctly computes the number of hitting sets of a hypergraph.
\end{lemma}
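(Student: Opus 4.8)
The plan is to prove correctness by induction on the number $|H|$ of hyperedges of $H$, combining the product decomposition over connected components with the partition of $hit_{G_i}$ supplied by Lemma~\ref{hit_subset} and the correctness of the subroutine given by Lemma~\ref{algorithm_subset}.

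For the base case, if $H=\emptyset$ there are no connected components, the loop of Algorithm~\ref{alg:hitting_sets} does not execute, and the returned value is $nhit=1=|hit_\emptyset|$, in accordance with the convention fixed earlier. The induction is well founded because every recursive evaluation is on a hypergraph with strictly fewer hyperedges: in Algorithm~\ref{alg:hitting_sets_subset} we have $s\in H_1$, hence $s\in N_{H_1}(s)$, so $H_2=H_1\setminus N_{H_1}(s)$ satisfies $|H_2|<|H_1|\le|H|$, and the components of $H_2$ on which Algorithm~\ref{alg:hitting_sets} is re-invoked are each no larger than $H_2$.

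For the inductive step, let $G_1,\dots,G_h$ be the connected components of $H$. Since distinct components have disjoint vertex sets and every hyperedge of $H$ belongs to exactly one of them, a set $t\subseteq V(H)$ hits all of $H$ if and only if $t\cap V(G_i)\in hit_{G_i}$ for every $i$, and conversely any choice of one hitting set per component reassembles into a unique hitting set of $H$; hence $|hit_H|=\prod_{i=1}^{h}|hit_{G_i}|$, exactly the product accumulated in $nhit$. It therefore suffices to show $nhit_i=|hit_{G_i}|$ for each $i$. Fix the hyperedge $e\in G_i$ selected by the algorithm. By Lemma~\ref{hit_subset} the sets $hit_{G_i}(e,s)$, over non-empty $s\subseteq e$, partition $hit_{G_i}$, so $|hit_{G_i}|=\sum_{s\subseteq e,\,s\neq\emptyset}|hit_{G_i}(e,s)|$. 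Each summand is evaluated by a call to Algorithm~\ref{alg:hitting_sets_subset}, which returns $2^{n_1}\cdot|hit_{H_2}|$; by Lemma~\ref{algorithm_subset} this equals $|hit_{G_i}(e,s)|$, where the value $|hit_{H_2}|$ is obtained by the recursive counting procedure, correct by the induction hypothesis since $|H_2|<|G_i|\le|H|$. Summing over $s$ gives $nhit_i=|hit_{G_i}|$, and multiplying over $i$ gives $nhit=|hit_H|$.

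The detail that needs the most care is the interface between the two algorithms: the assignment to $nhit_i$ must be understood as summing the quantities $|hit_{G_i}(e,s)|$ — equivalently, as invoking Algorithm~\ref{alg:hitting_sets_subset} on the component $G_i$ rather than on the whole hypergraph $H$ — since otherwise the product over components would overcount. One should also note that running Algorithm~\ref{alg:hitting_sets_subset} on $G_i$ in place of $H$ is harmless, because $e\in G_i$ and the hyperedges and vertices lying outside $G_i$ are irrelevant to the hitting sets $t$ with $t\cap e=s$ once the components have been separated. Verifying the disjointness of the $V(G_i)$ and this interface convention are the only genuinely load-bearing points; the remainder is a direct assembly of the cited lemmas.
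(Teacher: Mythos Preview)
Your proof is correct and follows the same approach as the paper, which simply states that correctness follows from Lemma~\ref{hit_subset} and Lemma~\ref{algorithm_subset}. You supply considerably more detail than the paper's one-line argument --- the explicit induction on $|H|$, the base case, the product decomposition over connected components, termination of the recursion, and the interface caveat about invoking the subroutine on $G_i$ rather than on $H$ --- all of which the paper leaves implicit.
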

\begin{proof}
The correctness directly follows from Lemma \ref{hit_subset} and  Lemma \ref{algorithm_subset}.
\end{proof}

\begin{example} [continued]
Consider the  bipartite representation of the hypergraph of Figure \ref{fig:example}(A) and let $e = \{0,3\}$. In  Example  \ref{ex:algorithm_hit} we already computed $|hit_H(e, \{0\})|$. Note that we need now to compute $|hit_H(e, \{3\})|$ and $|hit_H(e, \{0,3\})|$. In the first case it is easy to see that $H_2 = \{\{4\}, \{1,2\}\}$ and $H_2$ has two connected components. The first component, $\{\{4\}\}$, has only one hitting set and the second component $\{\{1,2\}\}$, has three hitting sets. Therefore $|hit_H(e, \{3\})|= 3$. 
As for $|hit_H(e, \{0,3\})|$ we note that $H_1=H$ and after removing $N_H(\{0,3\})$ from $H$ then $H_2= \emptyset$. In this case $|hit_{H_2}|=1$ and $V_1 = V(H_1)\setminus (V(H_2) \cup \{0,3\})= \{1,2,4\}$. Therefore $|hit_H(e, \{0,3\})|= 2^{|V_1|} \cdot hit_{H_2}= 2^3= 8$.
In the end, the sum of all hitting sets is $16$ and the hypergraph is self-dual, as it is easy to check. 
\end{example}
\subsection{Simple algorithm to search a minimal hitting set}

Given a PIDNF $H$ of a Boolean function $f$ satisfying \eqref{eq:intersect}, we can leverage Algorithm Dual in order to find, if any, a $x \in \{0,1\}^n$ such that $f(x)=f(\overline{x})$. By Lemma \ref{less_than_2}, if \eqref{eq:intersect} holds, then when $f(x)=1$ we are guaranteed that $f(\overline{x})=0$. However if \eqref{eq:intersect} holds and $f$ is not self-dual there must exists a $x \in \{0,1\}^n$ such that $f(x)=0$ and $f(\overline{x})=0$. If $x \in \{0,1\}^n$, we define the $w(x)$ the hamming weight of $x$, that is $w(x)=\sum_{i=0}^{n-1} x_i$. We search such an $x$ in Algorithm \ref{alg:minimal_hitting_sets}.

\begin{algorithm}
\caption{Search minimal hitting set}
\label{alg:minimal_hitting_sets}
\begin{algorithmic}[1]
\REQUIRE The PIDNF $F$ of a Boolean function satisfying \eqref{eq:intersect}.
\ENSURE True if $f$ is self-dual. 
\STATE $n \leftarrow |V(F)|$
\FOR{$i =1$ to $\left \lfloor \dfrac{n}{2} \right \rfloor$}
	\FOR{all $x \in \{0,1\}^n$ such that $w(x)=i$}
	    \IF {$f(x)=0$ and $f(\overline{x})=0$}
	      \RETURN $x$
    	\ENDIF
    \ENDFOR
\ENDFOR
\RETURN True
\end{algorithmic}
\end{algorithm}

\section{Experiments' results}

We generate hypergraphs with the following methodology. We pick a random number $x$ such that  $n_1\leq x < n_2$ where $0<n_1 < n_2< 2^n$. We choose $n_1= 2^{n-3}$ and $n_2 = 2^n-n_1$ where $n= |V(H)|$ is the number of variable of the Boolean function.  Then we consider the binary representation of $x$ as the characteristic vector of a set $t \subseteq \{0,1, \dots n-1\}$. We add $t$ to the hypergraph $H$ provided that no other hyperedge of $H$ is contained in $t$ and $t$ is not contained in any other hyperedge and that hypergraph satisfies \eqref{eq:intersect}.
Surprisingly Algorithm \ref{alg:minimal_hitting_sets} outperform all the other algorithms while, as expected, the algorithm for counting the hitting sets with brute force (\emph{HS brute force}) which counts all the hitting sets by checking for each $t \subseteq V(H)$ if $t$ is a hitting set, is the worst performing. It is also worth to note that Algorithm \ref{alg:hitting_sets}, surprisingly, outperform Algorithm A.

We made the software public on a colaboratory notebook
\footnote{\url{https://colab.research.google.com/drive/1CHGlgmKsk0pjJOubo_MqZZgw_2_Cq0oP?usp=sharing}}
. We run the tests on the same notebook and the execution times are reported in Table  \ref{tbl:execution_time}.

\begin{table}[h!]
    \centering
    \begin{tabular}{crrrrr}
        \toprule
        Vertices & Hyperedges & Algorithm A & Algorithm \ref{alg:hitting_sets} & HS Brute Force & Algorithm \ref{alg:minimal_hitting_sets} \\
        \midrule
        10 & 90 & 0.004 & 0.004 & 0.011 & 0.002 \\
        11 & 103 & 0.004 & 0.007 & 0.024 & 0.001 \\
        12 & 226 & 0.022 & 0.017 & 0.103 & 0.009 \\
        13 & 475 & 0.097 & 0.145 & 0.460 & 0.007 \\
        14 & 933 & 0.392 & 0.135 & 1.851 & 0.039 \\
        15 & 1655 & 1.325 & 0.340 & 7.091 & 0.101 \\
        16 & 2895 & 4.059 & 1.113 & 25.314 & 0.204 \\
        17 & 6477 & 20.945 & 4.886 & 116.119 & 1.953 \\
        18 & 11995 & 68.876 & 20.855 & 450.503 & 3.649 \\
        19 & 23573 & 274.425 & 68.606 & 1883.033 & 23.850 \\
        20 & 42271 & 971.873 & 328.349 & 6877.897 & 6.629 \\
        21 & 91937 & 5088.230 & 1435.641 & 32429.970 & 215.642 \\
        \bottomrule
    \end{tabular}
    \caption{Performance Comparison of Different Algorithms. The value reported for each algorithm are seconds of execution time. \label{tbl:execution_time}}
    \label{tab:algorithm_comparison_v2}
\end{table}

\section{Upcoming works}

We need to extends the experimentation to hypergraphs whose number of hyperedge is exponential to the number of vertices. We want to compare the running time of the algorithms with the execution time on a classical quantum computer and on a quantum annealer of the corresponding quantum (resp. quantum annealer) algorithms.
\clearpage

\bibliographystyle{abbrv}
\bibliography{sn-bibliography}

\end{document}